\newtheorem{theoremx}{Theorem}
\newtheorem{proof}{Proof}
\newtheorem{lemmax}{Lemma}
\newtheorem{remark}{Remark}
\newtheorem{assumption}{Assumption}
\newtheorem{problem}{Problem}
\newtheorem{example}{Example}
\begin{document}
%
\title{Prespecified-time observer-based distributed control of battery energy storage systems}


\author{\IEEEauthorblockN{Wu Yang, Shu-Ming Liang, Yan-Wu Wang, and Zhi-Wei Liu  }\\

\thanks{This work is supported by the National Natural Science Foundation of China under Grant 61903147 and the Interdisciplinary Scientific Research Foundation of GuangXi University under Grant 2022JCC023.

Wu Yang is with the Key Laboratory of Disaster Prevention and Structural Safety of Ministry of Education and School of Electrical Engineering, Guangxi University, Nanning, 530004, P.R. China. (e-mail: biwuyang@163.com).

Shu-Ming Liang is with the School of Electrical Engineering, Guangxi University, Nanning, 530004, P.R. China. (e-mail: a1017261884@163.com).

Yan-Wu Wang and Zhi-Wei Liu  are with the School of Artificial Intelligence and Automation, Huazhong University of Science and Technology, Wuhan, 430074, China. (e-mail: wangyw@hust.edu.cn; zwliu@hust.edu.cn).

}}


\IEEEtitleabstractindextext{%
\begin{abstract}
  This paper studies the state-of-charge (SoC) balancing and the total charging/discharging power tracking issues for battery energy storage systems (BESSs) with multiple distributed heterogeneous battery units. Different from the traditional cooperative control strategies based on the asymptotical or finite-time distributed observers, two distributed prespecified-time observers are proposed to estimate average battery units state and average desired power, respectively, which can be determined in advance and independent of initial states or control parameters. Finally, two simulation examples are given to verify the effectiveness and superiority of the proposed control strategy.
\end{abstract}

\begin{IEEEkeywords}
Battery energy storage system; State-of-charge; Power tracking; Distributed control; Prespecified-time observer.
\end{IEEEkeywords}}

\maketitle

\IEEEdisplaynontitleabstractindextext

%
\IEEEpeerreviewmaketitle

\section{Introduction}

Energy storage system is an indispensable part of microgrid, which can guarantee the power quality and reliability and reduce the energy lose\cite{chen2009progress}.
Among all kinds of energy storage technologies such as supercapacitor, superconducting magnetic, etc, BESS has an irreplaceable position in microgrid because of its advantages of fast response speed, high energy density, high efficiency and flexible configuration\cite{lawder2014battery,hill2012battery}. During past decades, various types of BESSs are rapidly integrated into microgrid\cite{feehally2016battery,gundogdu2018battery}. Despite their continuous advances in electrochemical technology, the control of BESSs remains a very challenging problem\cite{rahimi2013battery}.

Generally, BESSs are composed of multiple battery units, which can communicate with its nearby battery units and monitor and control the charging/discharging power of itself. The primary goal of BESS management is to balance the SoC of all battery units and to meet the total charging/discharging desired power of microgrid. Note that, due to variations and deviations in manufacturing processes and operating conditions, battery units can manifest different characteristics even with the same specifications. Consequently, how to design an appropriate control strategies for BESS has attracted the broad interest of researchers \cite{cai2016distributed,xing2019distributed,deng2020distributed,khazaei2018distributed,meng2021distributed}.

The control methods for BESS are typically divided into centralized, decentralized or distributed ones: in centralized control method, a central controller is needed to monitor and coordinate the SoC and other critical states of all battery units\cite{cao2008battery}, which is costly and prone to single-point failure; Compared with centralized control method, neither central controller nor communication between each battery units is required in decentralized control method, however, it has been reported in \cite{zeng2021hierarchical} that such control method may lead to slow SoC balancing, low SoC balancing accuracy and poor bus voltage quality; By contrast, in distributed control method, no extra centralized controller is needed and each battery unit can communicate with its neighbors, such method not only can save communication cost, but also has the advantages of robustness and reconfigurability \cite{yazdanian2014distributed}. Motivated by this, different distributed cooperative control schemes are developed for BESSs. For examples, \cite{xing2019distributed} proposed a distributed SoC balancing control method based on event-triggered mechanism, which allows each battery unit transmits signals to its neighboring ones only when the triggered condition is met. Unfortunately, the average desired power and average battery unit state are needed when implementing the proposed controller, which is unrealistic since both the average desired power and average battery unit state are of the global information rather than the local one. To overcome drawback in \cite{xing2019distributed}, recently, \cite{meng2021distributed} introduced two types of distributed observers, one can achieve asymptotic estimation on the average desired power and average battery unit state, while the other ensures finite-time estimation. It should be emphasized that the proposed distributed asymptotic observers in \cite{meng2021distributed} can only be carried out as time tends to infinity. In fact, finite-time convergence is more desirable in applications due to the advantage of faster convergence rate and robustness against uncertainties. Although the distributed finite-time observers are further constructed in \cite{meng2021distributed}, the observer convergence rate depends on the initial values and certain parameters of BESSs, which may restrict its applications since the settling time is not fixed for different initial values. Note that in the framework of prespecified-time stability, the settling time can be bounded by a fixed value, independent of the initial conditions and can be pre-set according to the task requirements, which makes prespecified-time stability more desirable. To the best of our knowledge, the SoC balancing and the total charging/discharging power tracking issues for heterogeneous BESSs has not been fully investigated.

Inspired by the above observations, this paper addresses the problem of the SoC balancing and the total charging/discharging power tracking for heterogeneous BESS. The main contribution of this paper is highlighted as follows: Different from the existing control strategies \cite{xing2019distributed,meng2021distributed}, the prespecified-time distributed observer-based cooperative control method is employed to achieve both the SoC balancing and the total charging/discharging power tracking for heterogeneous BESS. Specifically, by introducing time-dependent function, two prespecified-time distributed observers are proposed to reconstruct the average desired power and the average battery unit state, respectively. It is shown that the settling time of the proposed observer can be pre-set by user. Moreover, simulation examples are given to verify that the performance of the proposed control method is better than that of \cite{xing2019distributed,meng2021distributed}.

The rest of the paper is organized as follows. Section II reviews graph theory and describes BESSs. Section III introduces the charging/discharging power controller based on the prespecified-time average power observer and the prespecified-time average battery unit state observer. Section IV provides simulation results. Section V concludes this paper.


\section{Preliminaries and system description}
\subsection{Preliminaries}
In this  subsection, graph theory is employed to describe the communication of BESSs, in which vertex $i\in \mathcal{V}$ refers to battery unit $i\in \mathcal{V}$ and edge $(i,j)\in\mathcal{E}$ refers to the interconnection between the battery units $i$ and $j$, where $\mathcal{V}$ and $\mathcal{E}$ denote the vertex set and edge set, respectively. The adjacency  matrix $A=[a_{ij}]$ associated with graph $\mathcal{G}$ is defined as $a_{ij}=a_{ji}=0$ if there is no communication link between unit $i$ and unit $j$, otherwise $a_{ij}=a_{ji}=1$. The neighboring set of agent $i$ is represented by $\mathcal{N}_i=\left\{j\in\mathcal{V}:(i,j)\in \mathcal{E} \right\}$. The Laplacian matrix of graph $\mathcal{G}$ is defined as $L=[l_{ij}]$, where $l_{ij}=-a_{ij}$ if $i\neq j$ and $l_{ii}=\sum_{k=1,k\neq i}^N a_{ik}$. Obviously, the Laplacian $L$ is symmetric with eigenvalues $0=\lambda_1(L)<\lambda_2(L)\leq\lambda_3\leq\cdots\leq \lambda_N(L)$. A graph is connected if there exists a path between any two distinct nodes.

\subsection{System description}
Consider a class of heterogeneous BESS with $N$ battery units, described as follows based on the Coulomb counting method,
\begin{equation}\label{SOC+PowerModel}
\begin{array}{l}
s_i(t)=s_i(0)-\frac{1}{C_i}\int_0^ti_i(\tau)d\tau, \\
p_i(t)=V_i(t)i_i(t), i\in \mathcal{N}=\{1,2,\cdots,N\},
\end{array}
\end{equation}
where $s_i(t)$, $s_i(0)$, $C_i$ and $i_i(t)$ are the SoC value, the initial SoC value, the capacity and  the output  current of the $i$-th battery unit, respectively.
$p_i(t)$ is the output power of the $i$-th battery unit, $V_i(t)$ is the output voltage. Note that $p_i<0$ indicates changing and $p_i>0$ indicates dischanging. In general, the output voltage $V_i(t)$ of each unit is assumed to be unchanged during its operation\cite{tan2011design}. Differentiating both sides of the first equation of (\ref{SOC+PowerModel}) yields
\begin{equation}\label{SOCmodel1}
\dot{s}_i(t)=-\frac{1}{C_i}i_i(t), i\in \mathcal{N}.
\end{equation}
Then, from (\ref{SOC+PowerModel}) and (\ref{SOCmodel1}), one has
\begin{equation}\label{SOCmodel2}
\dot{s}_i(t)=-\frac{1}{C_iV_i}p_i(t), i\in \mathcal{N}.
\end{equation}

For each battery unit $i\in \mathcal{N}$, define the battery unit state
\begin{equation}\label{SOCmodel3}
x_i(t)=\left\{ \begin{array}{l}
C_iV_is_i(t), \text{ discharging  mode},\\
C_iV_i(1-s_i(t)), \text{ charging mode}.
\end{array} \right.
\end{equation}

Moreover, the following assumption is made.
\begin{assumption}[\cite{meng2021distributed}]
There exist constants $a_1,a_2>0$ such that
\begin{equation*}\label{assumption1}
a_1\leq x_i(t)\leq a_2, t\geq 0, i\in \mathcal{N}.
\end{equation*}
\end{assumption}

\begin{remark}\label{remark1}
Assumption \ref{remark1} ensures that SoC of battery units varies within an appropriate range to avoid overcharging or overdischarging during the operation of BESSs.
\end{remark}

Then it deduces from (\ref{SOCmodel2}) that
\begin{equation}\label{Dotx}
\dot{x}_i(t)=\left\{ \begin{array}{l}
C_iV_i\dot{s}_i(t)=-p_i(t), \\
-C_iV_i\dot{s}_i(t)=p_i(t).
\end{array} \right.
\end{equation}

Denote the average battery unit state and the average desired power as, respectively,
\begin{equation*}
x_a(t)=\frac{1}{N}\sum_{i=1}^N x_{i}(t),\\
p_a(t)=\frac{1}{N}p^*(t),
\end{equation*}
where $p^*(t)$ is total desired power.

In this paper, the following assumption on $p^*(t)$ is required.
\begin{assumption}[\cite{xing2019distributed,meng2021distributed}]
The total desired  charging/discharging power $p^*(t)$ of  BESSs satisfies
\begin{equation*}\label{assumption2}
\underline {P} \le\mid p^*(t)\mid\le \bar P, t\geq 0,\\
\mid\dot{p}^*(t)\mid \le \epsilon, t\geq 0,
\end{equation*}
\end{assumption}
where $\bar P$, $\underline P$, and $\epsilon$ are positive constants.

In this paper, each battery unit is able to communicate with its neighbors, then the following assumptions on the communication topology of BESSs are made.
\begin{assumption}[\cite{xing2019distributed,meng2021distributed}]\label{assumption3}
The communication topology of BESSs is an undirected connected graph.
\end{assumption}
\begin{assumption}[\cite{meng2021distributed}]\label{assumption4}
At least one battery unit has access to the total desired  power $p^*(t)$.
\end{assumption}

From Assumptions \ref{assumption3} and \ref{assumption4}, the matrix $H=L+B>0$\cite{hu2012robust}, where the diagonal matrix $B = {\rm diag} \{b_1,b_2,\cdots,b_N\}$, $b_i=1$ if the $i$-th battery unit gets access to the total desired power $p^*(t)$ and $b_i=0$ otherwise.

The control objective of BESS is stated as follows.
\begin{problem}\label{problem}
  Consider BESSs (\ref{SOC+PowerModel}), assume that Assumptions\ref{assumption1}-\ref{assumption4} hold. By using the local information,  design the SoC balancing and the charging/discharging power tracking distributed controllers for the battery units so that
\begin{enumerate}
\item all battery units achieve SoC balancing with  any  pre-specified accuracy $\varepsilon_{soc}\ge 0$ , i.e.,
\begin{equation*}
\lim_{t \to \infty} \|s_i(t)-s_j(t)\|\le \varepsilon_{soc}, i,j \in \{1,2,\cdots,N\};
\end{equation*}
\item  the total charging/discharging power of BESSs tracks the total desired power with  any  pre-specified accuracy $\varepsilon_{power}\ge 0$, i.e.,\\
\begin{equation*}
\lim_{t \to \infty} \|\sum\limits_{i=1}^N p_i-p^*(t)\| \le \varepsilon_{power}, i \in \{1,2,\cdots,N\}.
\end{equation*}
\end{enumerate}
\end{problem}

\section{Power Control Based On Prespecified-time observer}

To solve Problem \ref{problem}, the following control strategy is developed in \cite{xing2019distributed},
\begin{equation}\label{DistributedController1}
p_i(t)=\frac{x_i(t)}{x_a(t)}p_a(t).
\end{equation}
Note that both $p_a(t)$ and $x_a(t)$ are global information, which are not available to every battery unit. Recently, \cite{meng2021distributed} introduced two types of distributed observers, one can achieve asymptotic estimation on $p_a(t)$ and $x_a(t)$, while the other ensures finite-time estimation. Unfortunately, the proposed distributed asymptotic observers in \cite{meng2021distributed} can only be carried out as time tends to infinity, while the observer convergence rate of the distributed finite-time observers depends on the initial values of BESSs. In fact, the prespecified-time convergence property is more desirable. Motivated  by \cite{shao2021prespecified}, firstly, the following prespecified-time distributed observer is proposed to estimate $p_a(t)$,
\begin{equation}\label{PTDObserver1}
\begin{array}{l}
\dot{\hat{p}}_{a,i}=-\alpha {\rm sign}(v_i)-\psi\frac{r}{t_b-t_0}\omega(t)v_i,\\
v_i=\sum_{j\in {N_i}}(\hat{p}_{a,i}-\hat{p}_{a,j})+b_i(\hat{p}_{a,i}-p_a),
\end{array}
\end{equation}
where $\hat{p}_{a,i}$ is the $i$-th battery  unit's estimation of $p_a(t)$, $\alpha, \psi$ and $r$ are positive parameters, $v_i$  is the internal state,
the function $\omega(t)$ is defined as follows:
\[\omega(t)=
\begin{cases}
\frac{t_b-t_0}{t_b-t} & t \in \left[t_0,t_b\right),\\
1 & t \in \left[t_b,\infty\right),
\end{cases}
\]
where $t_b$ is a prespecified time.

\begin{lemmax}\label{lemma1}
Consider BESSs (\ref{SOC+PowerModel}) with the prespecified-time distributed observer (\ref{PTDObserver1}). Assume that Assumptions \ref{assumption2}-\ref{assumption4} hold. Then, the average desired power $p_a(t)$ can be estimated accurately at prespecified time $t_b$, i.e., $\hat{p}_{a,i}(t)=p_a(t)$ over $t\in[t_b,+\infty)$, if the observer parameter $\alpha$ is selected such that $\alpha\geq \frac{\epsilon}{N}$.
\end{lemmax}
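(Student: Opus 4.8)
The plan is to recast the observer in stacked vector form, build a quadratic Lyapunov function weighted by $H$, and show that the singular time-varying gain forces this function to zero exactly at $t_b$. First I would define the stacked estimation error $\tilde{p}_a = \hat{p}_a - p_a\mathbf{1}$, where $\hat{p}_a = (\hat{p}_{a,1},\ldots,\hat{p}_{a,N})^\top$ and $\mathbf{1}$ is the all-ones vector. Since $L\mathbf{1}=0$, stacking the definition of $v_i$ collapses to $v = (L+B)\tilde{p}_a = H\tilde{p}_a$, so the internal state is simply the error filtered through the positive definite matrix $H$. The stacked error dynamics then read $\dot{\tilde{p}}_a = -\alpha\,\mathrm{sign}(v) - \psi\frac{r}{t_b-t_0}\omega(t)v - \dot{p}_a\mathbf{1}$, where $\dot{p}_a = \frac{1}{N}\dot{p}^*$ obeys $|\dot{p}_a|\le\epsilon/N$.

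Next I would take $V = \tfrac12\tilde{p}_a^\top H\tilde{p}_a$ and compute $\dot{V} = v^\top\dot{\tilde{p}}_a = -\alpha\|v\|_1 - \psi\frac{r}{t_b-t_0}\omega(t)\|v\|_2^2 - \dot{p}_a\,v^\top\mathbf{1}$, using $v^\top\mathrm{sign}(v)=\|v\|_1$. The disturbance term is bounded by $|\dot{p}_a\,v^\top\mathbf{1}|\le\frac{\epsilon}{N}\|v\|_1$, so the choice $\alpha\ge\epsilon/N$ lets the discontinuous term absorb it and leaves $\dot{V}\le -\psi\frac{r}{t_b-t_0}\omega(t)\|v\|_2^2$. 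Because $\|v\|_2^2=\tilde{p}_a^\top H^2\tilde{p}_a\ge\lambda_{\min}(H)\,\tilde{p}_a^\top H\tilde{p}_a = 2\lambda_{\min}(H)V$ and $\omega(t)/(t_b-t_0)=1/(t_b-t)$ on $[t_0,t_b)$, this reduces to $\dot{V}\le -\frac{k}{t_b-t}V$ with $k=2\psi r\lambda_{\min}(H)>0$.

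Integrating this separable inequality from $t_0$ to $t$ gives $V(t)\le V(t_0)\big(\tfrac{t_b-t}{t_b-t_0}\big)^{k}$, which tends to $0$ as $t\to t_b^-$; by continuity $V(t_b)=0$, hence $\tilde{p}_a(t_b)=0$ and $\hat{p}_{a,i}(t_b)=p_a(t_b)$ for every $i$. For $t\ge t_b$ one has $\omega(t)=1$, and repeating the computation yields $\dot{V}\le -(\alpha-\tfrac{\epsilon}{N})\|v\|_1 - \psi\frac{r}{t_b-t_0}\|v\|_2^2\le 0$; together with $V(t_b)=0$ and $V\ge0$ this keeps $V\equiv 0$, so the exact estimate persists on $[t_b,\infty)$.

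The main obstacle I expect is the diverging coefficient $1/(t_b-t)$ as $t\to t_b$: this singular gain is exactly the mechanism producing prespecified-time convergence, so the delicate point is to integrate the inequality correctly and confirm that the blow-up of the gain outpaces the vanishing of $t_b-t$, driving $V$ to zero rather than to a positive limit. A secondary technicality is that $\mathrm{sign}(v)$ renders the right-hand side discontinuous, so the Lyapunov computation should be read in the Filippov sense, interpreting $v^\top\mathrm{sign}(v)$ through the set-valued sign at $v_i=0$; the monotone structure of the bounds nonetheless makes this non-smooth step routine.
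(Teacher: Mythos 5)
Your proposal is correct and follows essentially the same route as the paper: since $v = H\tilde{p}$, your Lyapunov function $\tfrac12\tilde{p}^\top H\tilde{p}$ is algebraically identical to the paper's $V_1=\tfrac12 v^\top H^{-1}v$, and both arguments absorb the $\dot{p}_a$ term using $\alpha\ge\epsilon/N$, reduce to the same singular inequality $\dot{V}\le -\tfrac{2\psi r\lambda_{\min}(H)}{t_b-t}V$ (the paper integrates it via $\Omega(t)=\omega^r(t)$, you integrate $1/(t_b-t)$ directly, with the same result), and then use continuity plus $\dot V\le 0$ on $[t_b,\infty)$ to keep $V\equiv 0$. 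The only substantive item the paper includes that you omit is the final boundedness check, under $\psi r/\lambda_{\max}(H^{-1})>2$, that $\omega(t)\tilde{p}(t)$ and hence $\dot{\hat{p}}_a$ stay bounded on $[t_0,t_b)$ despite the diverging gain --- precisely the well-posedness concern you flagged at the end but did not carry out.
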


\begin{proof}
Denote $\hat{p}_{a} = (\hat{p}_{a,1},\hat{p}_{a,2}, \cdots, \hat{p}_{a,N})^T$, $ \tilde{p} =\hat{p}_{a}-1_Np_a,$  and
$ \tilde{p}  = (\tilde{p}_1,\tilde{p}_2, \cdots,\tilde{p}_N)^T$, then it follows from (\ref{PTDObserver1}) that
\begin{align*}
v&=L\hat{p}_a+B\hat{p}_a-B1_Np_a\\
&=H\hat{p}_a-B1_Np_a\\
&=H\tilde p,
\end{align*}
where $H=L+B$.

Consider the Lyapunov function candidate $ V_1(t)=\frac{1}{2}v^T H^{-1}v$. Then, The time derivative of $V_1(t)$ is evaluated as
\begin{align}\label{DotV1}
\dot{V}_1(t)&=v^TH^{-1}H\dot{\tilde p}\notag\\
&=v^T(\dot{\hat{P}}_a-1_N\frac{1}{N}\dot{p}^*)\notag\\
&\leq-\alpha \sum_{i=1}^N |v_i| -\frac{ r\psi}{t_b-t_0}\omega(t) \sum_{i=1}^Nv_i^2+\frac{\epsilon}{N}\|v\|_1\notag\\
&\le(-\alpha+\frac{\epsilon}{N})\|v\|_1- \frac{r \psi}{t_b-t_0}\omega(t)\|v\|_2^2\notag\\
&\le -\frac{r \psi}{t_b-t_0}\omega(t)\|v\|_2^2 \notag\\
&\le -\frac{r \psi}{t_b-t_0}\omega(t)\frac{2}{\lambda_{\max}(H^{-1})}V_1(t),
\end{align}
by using the fact that $\alpha\geq\frac{\epsilon}{N}$ and $\frac{1}{2}v^TH^{-1}v \le \frac{1}{2}\lambda_{\max}(H^{-1})\|v\|_2^2.$

Denote $\Omega(t)=\omega^r(t)$, it gives that
\begin{equation}\label{omega}
\frac{r}{t_b-t_0}\omega(t)=\frac{\dot{\Omega}(t)}{\Omega(t)}.
\end{equation}

According to (\ref{DotV1}) and (\ref{omega}), it gets that
\begin{equation}\label{BoundednessV}
V_1(t)\leq [\omega(t)]^{-\frac{2\psi r}{\lambda_{\max}(H^{-1})}}[\omega(t_0)]^{\frac{2\psi r}{\lambda_{\max}(H^{-1})}}V_1(t_0).
\end{equation}
Moreover, one from the definition of $\omega(t)$ that
\begin{equation}
[\omega(t_0)]^{\frac{2\psi r}{\lambda_{\max}(H^{-1})}}=1,\lim_{t \rightarrow t^{-}_b}[\omega(t)]^{-\frac{2\psi r}{\lambda_{\max}(H^{-1})}}=0,
\end{equation}
which implies that
\begin{equation}
0\leq\lim_{t \rightarrow t^{-}_b}V_1(t)\leq 0.
\end{equation}

Therefore, $\lim_{t \rightarrow t^{-}_b}\| v \|=0$, so does $\lim_{t \rightarrow t^{-}_b}\| \tilde{p}\|=0$, that is, the average desired power $p_a(t)$ is estimated accurately at prespecified time $t_b$.

Furthermore, for $t\in [t_b,+\infty)$, following the similar step in above, one obtains that
\begin{equation*}
\dot{V}_1(t)\le-\frac{r \psi}{t_b-t_0}\|v\|_2^2.
\end{equation*}

Since $v$ is continuous, $V_1(t)$ is also continuous. Therefore, $\lim_{t \rightarrow t_b}V_1(t)=V_1(t_b)$. Then one can get
\begin{equation*}
0\leq V_1(t)\leq V_1(t_b)=0,t\in [t_b,+\infty),
\end{equation*}
which means that $V_1(t)\equiv0$ on $t\in[t_b,+\infty)$, thus $\tilde{p}\equiv0$ and $\dot{\hat{p}}_a \equiv0$ over $t\in[t_b,+\infty)$.

Lastly, the boundedness  of the proposed observer over $[t_0, \infty)$ is checked.

Denote $\tilde{\psi}_1=\frac{\psi }{\lambda_{\max}(H^{-1})}$, for $t\in [t_0,t_b)$, it can be obtained from (\ref{BoundednessV}) that
\begin{equation}
\|\tilde{p}(t)\|\leq \sqrt{\frac{\lambda_{\max}(H)}{\lambda_{\min}(H)}}[\omega(t)]^{-\tilde{\psi}_1 r}\|\tilde{p}(t_0)\|
\end{equation}
and
\begin{equation}
\|\omega(t)\tilde{p}(t)\|\leq \sqrt{\frac{\lambda_{\max}(H)}{\lambda_{\min}(H)}} [\omega(t)]^{-(\tilde{\psi}_1 r-1)}\|\tilde{p}(t_0)\|.
\end{equation}

When $\tilde{\psi}_1 r>2$, one has that $0<[\omega(t)]^{-\tilde{\psi}_1 r}<1$ and $0<[\omega(t)]^{-(\tilde{\psi}_1 r-1)}<1$. Then
\begin{equation}
\max \{\|\omega(t)\tilde{p}(t)\|, \|\tilde{p}(t)\|\}\leq\ \sqrt{\frac{\lambda_{\max}(H)}{\lambda_{\min}(H)}} \|\tilde{p}(t_0)\|.
\end{equation}
Moreover, it follows from (\ref{PTDObserver1}) that
\begin{align}
\|\dot{\hat{p}}_a\|&\leq\ \alpha\|H\|\|\tilde{p}\|+ \frac{r\psi}{t_b-t_0} \|H\|\|\omega(t)\tilde{p}\|\notag\\
&\leq (\alpha+ \frac{r \psi}{t_b-t_0} )\|H\|\sqrt{\frac{\lambda_{\max}(H)}{\lambda_{\min}(H)}}\|\tilde{p}_0\|,
\end{align}
which confirms that $\hat{p}_a$ is bounded on $[t_0,t_b)$. Similarly, it can be seen from the definition that $\hat{p}_a(t)$ is bounded on $[t_b,\infty)$.
\hfill $\blacksquare$
\end{proof}

For each battery unit $i\in \mathcal{N}$, to estimate $x_a(t)$, the following prespecified-time battery unit average state observer is further constructed:
\begin{align}\label{PTDObserver2}
\dot{q}_i=&-\beta {\rm sign}(\hat{\xi}_{i})-\psi\frac{r}{t_a}\omega(t) \hat{\xi}_{i} \notag\\
\hat{\xi}_{i}=&\sum_{j=1}^n  a_{ij}(\hat{x}_{a,i}-\hat{x}_{a,j})\notag\\
\hat{x}_{a,i}=&\sum_{j=1}^n  a_{ij} (q_i-q_j)+x_i
\end{align}
where $\hat{x}_{a,i}$ is the estimate state of the $i$-th battery unit for $x_a(t)$, $\beta$ is a positive parameter, $\hat{\xi}_{i}$ and $q_i$ are the internal states.

\begin{lemmax}\label{lemma2}
Consider BESSs (\ref{SOC+PowerModel}) with the prespecified-time distributed observer (\ref{PTDObserver2}). Assume that Assumptions \ref{assumption1}, \ref{assumption3} and \ref{assumption4} hold. Then, the average battery unit state $x_a(t)$ can be estimated accurately at prespecified time $t_b$, i.e., $\hat{x}_{a,i}(t)=x_a(t)$ over $t\in[t_b,+\infty)$, if the observer parameter $\beta$ is selected such that $\beta\geq \frac{\bar P}{\sqrt N \lambda_2(L)}$.
\end{lemmax}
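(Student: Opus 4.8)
The plan is to mirror the Lyapunov-plus-time-scaling argument used for Lemma \ref{lemma1}, the one essential new feature being that the coupling matrix is now the bare Laplacian $L$, which is only positive semidefinite; consequently the whole analysis must be confined to the subspace orthogonal to the consensus direction $1_N$.

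First I would put the observer (\ref{PTDObserver2}) in vector form. Stacking components gives $\hat{x}_a = Lq + x$, $\hat{\xi} = L\hat{x}_a$, and $\dot{q} = -\beta\,{\rm sign}(\hat{\xi}) - \psi\frac{r}{t_a}\omega(t)\hat{\xi}$, where $\hat{x}_a=(\hat{x}_{a,1},\dots,\hat{x}_{a,N})^T$ and similarly for $q$, $\hat{\xi}$, $x$. Introduce the estimation error $e = \hat{x}_a - 1_N x_a$. Since $1_N^T L = 0$, one has $1_N^T\hat{x}_a = 1_N^T x = N x_a$ for every $t$: the observer conserves the true average, so $1_N^T e \equiv 0$ and $e$ remains in $1_N^\perp$ for all time. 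This conservation law plays the role that $v = H\tilde{p}$ played in Lemma \ref{lemma1}, and it is precisely what lets $\lambda_2(L)$ act as an effective smallest eigenvalue: for $e\perp 1_N$ one has $\|Le\|_2\ge\lambda_2(L)\|e\|_2$, which is where connectivity (Assumption \ref{assumption3}, i.e. $\lambda_2(L)>0$) enters. Note also that $\hat{\xi}=Le$ and $\dot{e} = L\dot{q} + \Pi\dot{x}$, where $\Pi = I - \frac{1}{N}1_N 1_N^T$ and $\dot{x}_i = \pm p_i$ is the bounded disturbance generated by charging/discharging.

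Next I would take the Lyapunov candidate $V_2(t)=\frac{1}{2}\|e\|_2^2$ and differentiate. Using $e^T L = \hat{\xi}^T$ and $e^T\Pi = e^T$ (because $e\perp 1_N$), the cross terms collapse cleanly to $\dot{V}_2 = \hat{\xi}^T\dot{q} + e^T\dot{x} = -\beta\|\hat{\xi}\|_1 - \psi\frac{r}{t_a}\omega(t)\|\hat{\xi}\|_2^2 + e^T\dot{x}$. The hard part is the disturbance term $e^T\dot{x}$: this is exactly where the discontinuous term $-\beta\,{\rm sign}(\hat{\xi})$ must act as a sliding-mode-type rejection. Estimating $e^T\dot{x}\le\|e\|_2\|\dot{x}\|_2$ and controlling $\|\dot{x}\|_2=\|p\|_2$ through Assumptions \ref{assumption1}--\ref{assumption2} (the applied powers, hence $\dot{x}$, are bounded in terms of $\bar{P}$, the balanced allocation yielding $\|\dot{x}\|_2\le\bar{P}/\sqrt{N}$), while using $\|\hat{\xi}\|_1\ge\|\hat{\xi}\|_2\ge\lambda_2(L)\|e\|_2$, the choice $\beta\ge\frac{\bar{P}}{\sqrt{N}\,\lambda_2(L)}$ forces the first-order terms in $\|e\|_2$ to cancel. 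What survives is $\dot{V}_2\le-\psi\frac{r}{t_a}\omega(t)\|\hat{\xi}\|_2^2\le-2\lambda_2(L)^2\psi\frac{r}{t_a}\omega(t)V_2$, which is exactly the inequality $\dot{V}_1\le-c\,\omega(t)V_1$ obtained in Lemma \ref{lemma1}. I expect the genuine difficulty of the whole argument to sit here, in pinning down the disturbance bound and the resulting constant $\bar{P}/(\sqrt{N}\,\lambda_2(L))$; everything else is structural.

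From this point the argument transcribes from Lemma \ref{lemma1}. Setting $\Omega=\omega^r$ so that $\frac{r}{t_a}\omega=\dot{\Omega}/\Omega$ as in (\ref{omega}), integrating the differential inequality gives $V_2(t)\le[\omega(t)]^{-2\lambda_2(L)^2\psi r}V_2(t_0)$; since $\omega(t_0)=1$ and $\omega(t)\to\infty$ as $t\to t_b^-$, we obtain $\lim_{t\to t_b^-}V_2(t)=0$, hence $\|e\|\to0$ and $\hat{x}_{a,i}\to x_a$ at the prespecified time $t_b$. Continuity of $V_2$ together with the same estimate repeated on $[t_b,\infty)$ (where $\omega\equiv1$) forces $V_2\equiv0$ there, i.e. exact estimation for all $t\ge t_b$. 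Finally I would verify boundedness of $q$, $\hat{x}_a$ and $\hat{\xi}$ on $[t_0,t_b)$ exactly as in Lemma \ref{lemma1}, imposing the analogue of the gain condition $\tilde{\psi}_1 r>2$ used there so that $\omega(t)\|e(t)\|$ stays bounded as $t\to t_b^-$.
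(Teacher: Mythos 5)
Your proof is correct and takes essentially the same route as the paper's: the same Lyapunov function $V_2=\tfrac{1}{2}\|e\|_2^2$, the same exact cancellation of the first-order terms under $\beta\geq \frac{\bar P}{\sqrt{N}\,\lambda_2(L)}$, the same $\omega$-based differential inequality and integration to force $V_2\to 0$ as $t\to t_b^-$ (and $V_2\equiv 0$ thereafter), and the same boundedness check under $\tilde{\psi}r>2$. The only cosmetic difference is that you handle the disturbance directly via $e^T\dot{x}$ together with the explicitly stated conservation law $1_N^Te\equiv 0$ and $\|\hat{\xi}\|_2\geq\lambda_2(L)\|e\|_2$, whereas the paper routes it through $\mu^TL^{\dagger}\dot{x}$ with $\sigma_{\max}(L^{\dagger})=1/\lambda_2(L)$ --- equivalent estimates yielding the identical constant.
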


\begin{proof}
Denote $ \hat{x}_a  = (\hat{x}_{a,1},\hat{x}_{a,2}, \cdots, \hat{x}_{a,N})^T$, $\tilde{x} =\hat{x}_{a}-x_a1_N$ and $\tilde{x} = (\tilde{x}_1,\tilde{x}_2, \cdots,\tilde{x}_N)^T$, then one can obtain that
\begin{align}\label{tildeX}
\dot{\tilde{x}}&=\dot{\hat{x}}_{a}-\dot{x}_a1_N \notag\\
&=L\dot{q}+\dot{x}-\dot{x}_a 1_N\notag\\
&=L\dot{q} +I_n \dot{x} -\frac{1}{n}1_N 1_N^T \dot{x}\notag\\
&= L\dot{q}+LL^\dagger\dot{x},
\end{align}
where  $L^\dagger$ is the generalized inverse matrix of $L$.

Consider the Lyapunov function candidate $V_2(t)=\frac{1}{2}\tilde{x}^T \tilde{x}$. The derivative of $V_2(t)$ along the trajectory of (\ref{tildeX}) is evaluated as
\begin{align}
\dot{V}_2(t)&=\tilde{x}^T\dot{\tilde{x}}=\tilde{x}^T L \dot{q}+\tilde{x}^T LL^\dagger\dot{x}\notag \\
&=\mu^T(t)\dot{q} +\mu^T(t)L^\dagger\dot{x},
\end{align}
where $ \mu (t)=L \tilde{x}$.

In fact,
\begin{align}
\mu^T(t)\dot{q}
&=-\sum_{i=1}^n\quad \beta|\mu_i(t)|-\mu^T(t)\psi\frac{r}{t_b-t_0}\omega(t)L\hat{x}_a\notag\\
&\le-\beta\|\mu(t)\|_1-\psi\frac{r}{t_b-t_0}\omega(t)\tilde{x}^TL^TL(\tilde{x}+x_a1_N)\notag\\
&\le-\beta\|\mu(t)\|_1-\psi\frac{r}{t_b-t_0}\omega(t)\tilde{x}^TL^TL\tilde{x}.
\end{align}
and
\begin{align}
\mu^T(t)L^\dagger\dot{x} &\le\sigma_{\max}(L^\dagger)\|\mu(t)\|_2\|\dot{x}(t)\|_2 \notag\\
&\le\sigma_{\max}(L^\dagger)\sqrt N\|\mu(t)\|_1\|\dot{x}(t)\|_\infty \notag\\
&\le \frac{\sqrt N}{\lambda_2(L)}\|\mu(t)\|_1\|\dot{x}(t)\|_\infty \notag\\
&\le \frac{\sqrt N\bar P}{N\lambda_2(L)}\|\mu(t)\|_1,
\end{align}
where $\sigma_{\max}(\cdot)$ denotes the maximum singular value. Then,  one can further get
\begin{align}
\dot{V}_2(t)&\le(\frac{\sqrt N\bar P}{N\lambda_2(L)}-\beta)\|\mu(t)\|_1-\psi\frac{r}{t_a}\omega(t)\tilde{x}^TL^TL\tilde{x}\notag\\
&\le-\psi\frac{r}{t_a}\omega(t)\tilde{x}^TL^TL\tilde{x},
\end{align}
by using $\beta \geq \frac{\bar P}{\sqrt N \lambda_2(L)}$.

Since the matrix $L$ is a symmetrical, then there exists a matrix $T$ such that $L=T^H \Lambda T$. Let $Y=T^H\tilde{x}$, one can deduce that
\begin{align}
\dot{V}_2(t) &\le-\psi\frac{r}{t_b-t_0}\omega(t)Y^TT^HL^TLTY\notag\\
&\le-\psi\frac{r}{t_b-t_0}\omega(t)\lambda_2(L^TL)Y^TY\notag\\
&\le -2\psi\frac{r}{t_b-t_0}\omega(t)\lambda_2(L^TL)V_2(t)\notag\\
&= -2\psi\frac{\dot{\Omega}(t)}{\Omega(t)}\lambda_{2}(L^TL)V_2(t).
\end{align}
Which means that
\begin{equation}\label{V2}
V_2(t)\le [\omega(t)]^{-2r\psi\lambda_{2}(L^TL)}[\omega(t_0)]^{2r\psi\lambda_{2}(L^TL)}V_2(t_0),
\end{equation}
and
\begin{equation}
0\leq\lim_{t \rightarrow t^{-}_b}V_2(t)\leq0.
\end{equation}
Therefore, $\lim_{t \rightarrow t^{-}_b}\| \tilde{x}\|=0 $, that is, the average battery unit state $x_a(t)$ is estimated accurately at prespecified time $t_b$.
Following the similar step in Lemma \ref{lemma1}, for $t\in [t_b,+\infty)$, one obtains that  $V_2(t)\equiv0$ on $t\in[t_b,+\infty)$, thus $\tilde{x}\equiv0$ over $t\in[t_b,+\infty)$.

Lastly, the boundedness of $\hat{x}_{a,i}$ over $[t_0, \infty)$ is verified.

Denote $\tilde{\psi}_2=\psi\lambda_{2}(L^TL)$, for $t\in [t_0,t_b)$, it can be obtained from (\ref{V2}) that
\begin{equation}
\|\tilde{x}(t)\|\leq[\omega(t)]^{-\tilde{\psi}_2 r}\|\tilde{x}(t_0)\|
\end{equation}
and
\begin{equation}
\|\omega(t)\tilde{x}(t)\|\leq[\omega(t)]^{-(\tilde{\psi}_2 r-1)}\|\tilde{x}(t_0)\|.
\end{equation}

Moreover, it can be derived that
\begin{equation}
\max\{\|\omega(t)\tilde{x}(t), \|\tilde{x}(t)\|\} \|\leq\ \|\tilde{x}(t_0)\|
\end{equation}
when $\tilde{\psi}_2 r>2$.

Therefore, $\|\dot{q}\|\leq\beta +n\frac{r \psi}{t_a} \|L\| \|\tilde{x}(t_0)\|,$ which means that  $q$ is bounded on $[t_0,t_b)$, and
$\|\hat{x}_a\|\leq\  n\|L\|\|q\|+n\|x\|, $ which confirms that $\hat{x}_a$ is bounded on $[t_0,t_b)$. Similarly, it can be seen from the definition that $\hat{x}_a(t)$ is bounded on $[t_b,\infty)$.\hfill $\blacksquare$
\end{proof}

Based on Lemmas \ref{lemma1} and  \ref{lemma2}, the following result on the  performance of the power control algorithms (\ref{DistributedController1})  with the prespecified-time average power observer (\ref{PTDObserver1}) and the prespecified-time battery unit state observer (\ref{PTDObserver2}) is given.

\begin{theoremx}\label{theorem}
Suppose that Assumptions \ref{assumption1}-\ref{assumption4} hold for BESSs (\ref{SOC+PowerModel}), the observer parameters $\alpha$ and $\beta$ are selected as in Lemmas \ref{lemma1} and \ref{lemma2}, respectively, then, under the power control algorithms (\ref{DistributedController1}) with the prespecified-time distributed observers (\ref{PTDObserver1}) and (\ref{PTDObserver2}), it can be concluded that Problem \ref{problem} is solved.
\end{theoremx}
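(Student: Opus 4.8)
The plan is to use the two observer lemmas to reduce the closed loop, for $t\ge t_b$, to the idealized controller (\ref{DistributedController1}), and then to verify the two requirements of Problem \ref{problem} separately. First I would note that the law actually implemented by unit $i$ is (\ref{DistributedController1}) with the unavailable global quantities $x_a(t)$ and $p_a(t)$ replaced by the local estimates, i.e. $p_i(t)=\frac{x_i(t)}{\hat{x}_{a,i}(t)}\hat{p}_{a,i}(t)$. On the finite transient $[t_0,t_b)$ the boundedness conclusions of Lemmas \ref{lemma1} and \ref{lemma2}, together with Assumption \ref{assumption1}, keep all signals bounded, so $x_i(t_b)$ is finite and well defined and no finite escape occurs. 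By Lemma \ref{lemma1}, $\hat{p}_{a,i}(t)=p_a(t)$ for $t\ge t_b$, and by Lemma \ref{lemma2}, $\hat{x}_{a,i}(t)=x_a(t)$ for $t\ge t_b$; hence on $[t_b,\infty)$ the implemented law coincides exactly with (\ref{DistributedController1}), and it suffices to analyse $t\ge t_b$.

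For the power-tracking requirement I would substitute (\ref{DistributedController1}) directly. For $t\ge t_b$, $\sum_{i=1}^N p_i(t)=\frac{p_a(t)}{x_a(t)}\sum_{i=1}^N x_i(t)=\frac{p_a(t)}{x_a(t)}\,N x_a(t)=N p_a(t)=p^*(t)$, using $\sum_{i=1}^N x_i=N x_a$ and $p_a=\frac{1}{N}p^*$. Thus the tracking error vanishes identically on $[t_b,\infty)$, so $\lim_{t\to\infty}\|\sum_{i=1}^N p_i-p^*\|=0\le\varepsilon_{power}$ for every admissible accuracy $\varepsilon_{power}\ge 0$.

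For the SoC-balancing requirement I would substitute (\ref{DistributedController1}) into the SoC dynamics (\ref{SOCmodel2}) via the definition (\ref{SOCmodel3}). In discharging mode $x_i=C_iV_is_i$ gives $\dot{s}_i=-\frac{1}{C_iV_i}p_i=-\frac{p_a(t)}{x_a(t)}s_i$, so each pairwise error $e_{ij}=s_i-s_j$ obeys $\dot{e}_{ij}=-\frac{p_a(t)}{x_a(t)}e_{ij}$; the charging mode yields the same linear equation for $1-s_i$, hence again for $e_{ij}$. The coefficient is uniformly positive within a given mode, since $|p_a|=|p^*|/N\ge\underline P/N$ by Assumption \ref{assumption2} while $x_a\le a_2$ by Assumption \ref{assumption1}, so $\frac{p_a}{x_a}\ge\frac{\underline P}{N a_2}>0$. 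Integrating gives $e_{ij}(t)=e_{ij}(t_b)\exp\!\big(-\int_{t_b}^{t}\frac{p_a(\tau)}{x_a(\tau)}\,d\tau\big)$, whose exponent diverges, so $e_{ij}(t)\to 0$ and $\lim_{t\to\infty}\|s_i-s_j\|=0\le\varepsilon_{soc}$; together with the previous paragraph this solves Problem \ref{problem}.

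The hard part will be the SoC-balancing step, specifically recognising that the closed-loop SoC dynamics scale every pairwise difference by one common exponential factor whose exponent diverges thanks to the uniform positive lower bound on $p_a/x_a$ (obtained from Assumptions \ref{assumption1} and \ref{assumption2}). Care is also needed to treat the charging and discharging modes uniformly through the definition (\ref{SOCmodel3}) of $x_i$, and to confirm that on the transient $[t_0,t_b)$ --- where the estimates have not yet converged and the denominator $\hat{x}_{a,i}$ is only bounded rather than equal to $x_a$ --- no finite escape occurs, which is secured by the boundedness parts of Lemmas \ref{lemma1} and \ref{lemma2}.
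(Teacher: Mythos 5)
Your proof is correct and follows the same skeleton as the paper's: invoke Lemmas \ref{lemma1} and \ref{lemma2} so that for $t\ge t_b$ the implemented law coincides exactly with the idealized controller (\ref{DistributedController1}), then conclude that both objectives are met. The difference lies in the second step: the paper's proof is a one-liner that outsources the analysis of (\ref{DistributedController1}) to the cited reference \cite{xing2019distributed}, whereas you make it self-contained --- the exact identity $\sum_{i=1}^N p_i=\frac{p_a}{x_a}\sum_{i=1}^N x_i=Np_a=p^*$ for power tracking, and the observation that within each mode every $s_i$ (or $1-s_i$ in charging) satisfies one common scalar linear ODE, so all pairwise differences decay through a single exponential factor whose rate $|p_a|/x_a\ge \underline{P}/(N a_2)$ is uniformly positive by Assumptions \ref{assumption1} and \ref{assumption2}. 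This buys a strictly stronger conclusion than the theorem requires (exact tracking on $[t_b,\infty)$ and exponential, not merely asymptotic, balancing) at no extra cost. Two points to tighten. First, a sign slip: in charging mode $p_a<0$, so your literal inequality $p_a/x_a\ge \underline{P}/(Na_2)>0$ holds only in discharge; the uniform statement should be $|p_a|/x_a\ge \underline{P}/(Na_2)$, the charging-mode error equation being $\dot{e}_{ij}=(p_a/x_a)e_{ij}$ with a \emph{negative} coefficient, which still gives decay. Second, on the transient $[t_0,t_b)$ your ``no finite escape'' claim is not fully secured by the lemmas as stated: Lemma \ref{lemma2} gives boundedness of $\hat{x}_{a,i}$, but boundedness alone does not keep the denominator of the implemented law $p_i=\frac{x_i}{\hat{x}_{a,i}}\hat{p}_{a,i}$ away from zero, so one needs $\hat{x}_{a,i}$ bounded away from zero (or a saturation in the implementation) to rule out division by zero before $t_b$ --- a gap the paper's own two-sentence proof leaves equally unaddressed, so it does not count against you relative to the paper, but it is worth flagging.
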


\begin{proof}
Based on Lemmas \ref{lemma1} and \ref{lemma2},  $\hat{p}_{a,i}(t)=p_a(t)$, $\hat{x}_{a,i}(t)=x_a(t)$, when $t\in[t_b,+\infty)$. Then, similar to \cite{xing2019distributed}, both the SoC balancing and the desired power tracking are achieved by the controller (\ref{DistributedController1}). \hfill $\blacksquare$
\end{proof}

\begin{remark}\label{remark2}
In contrast to the asymptotic observers-based cooperative controller or the finite-time observers-based cooperative controller\cite{meng2021distributed}, the proposed cooperative controller relies on the prespecified-time observer, where the time-dependent function $\omega(t)$ is leveraged.
\end{remark}

\begin{remark}\label{remark3}
The function $\omega(t)$ plays a key role in ensuring the prespecified-time stability of the corresponding estimation error systems. Specifically,  from the definition of $\omega(t)$, one has that $\lim_{t \rightarrow t^{-}_b} [\omega(t)]^{-\tau}=0$ and $[\omega(t_0)]^{\tau}=1$ for any $\tau>0$, which leads to $\lim_{t \rightarrow t^{-}_b}V_i(t)=0, i=1,2 $.
\end{remark}

\begin{remark}\label{remark4}
Compared with the asymptotic observers and the finite-time observers presented in \cite{meng2021distributed}, the converge property of the former can only be ensured when $t\rightarrow \infty$, while the converge rate of the latter depends on the initial state of BESSs, the prespecified time observers are designed herein, and its converge rate can be determined in advance according to task requirement. Moreover, in the next section, some simulation examples are given to show the advantage of our results.
\end{remark}

\section{Simulation Examples}
In this section, the effectiveness and superiority of the proposed control strategy with some existing works, such as \cite{meng2021distributed}, are demonstrated by numerical examples.

\begin{example}\label{example1}
Consider the following BESSs with $6$ battery units, the parameters of the battery units are $V=20$V,  $C=220$Ah, and the initial SoC of the battery units are $(0.95,0.86,0.83,0.93,0.97,0.88)$. The communication topology of BESS is shown in Fig. \ref{CommunicationTopology}. Besides, assume that only battery unit $1$ has access to the desired charging/discharging power, i.e., $b_1=1$, and $b_i=0, i=2,3,\cdots, 6$. Moreover, assume that the desired power is given as follows:

\begin{enumerate}
  \item Case 1: $p^*(t)=\pm(4200\sin(t)+4200$)W.
  \item Case 2: \[p^*(t)=\pm
\begin{cases}
1000 & \text{ $t\in[0,0.25)$},\\
1500\sin(5t)+2000 & \text{ $t\in[0.25,0.5)$},\\
5000t & \text{ $t\in[0.5,0.75)$},\\
-4000t+6000 & \text{ $t\in[0.75,1)$} .
\end{cases} \]
\end{enumerate}
\end{example}

\begin{figure}
\centering{\includegraphics[width=5cm]{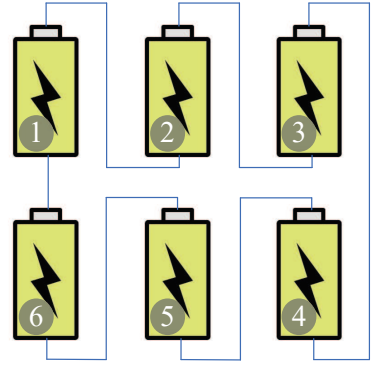}}
\caption{The communication network topology}
\label{CommunicationTopology}
\end{figure}

For the prespecified-time distributed observers (\ref{PTDObserver1}) and (\ref{PTDObserver2}), the prespecified time is chosen as $t_b=4$ and $t_0=0$, the observer parameters are selected as $\psi=4$, $r=50$, $\alpha=1000$, $\beta=3430$, the initial states of the observers (\ref{PTDObserver1}) and (\ref{PTDObserver2}) are given as $v_i(0)=0$, $\hat{p}_i(0)=p_i(0)=0$, $q_i(0)=0$, $\hat{x}_i(0)=x_i(0)$, $i=1,2,\cdots,6$. Then, the simulation results under the controller  (\ref{DistributedController1}) with the observers (\ref{PTDObserver1}) and (\ref{PTDObserver2}) are shown in Figs.\ref{DisCCase1}-\ref{Case2}. Where the solid line represents the simulation results under our control method, while the dotted line represents that in \cite{meng2021distributed}. It can be observed that the proposed cooperative control method is better in comparison with the scheme in \cite{meng2021distributed}.

\begin{figure}[htbp]
\centering
\subfigure[SoC of all battery units.]{
\includegraphics[width=8cm, height=4.5cm]{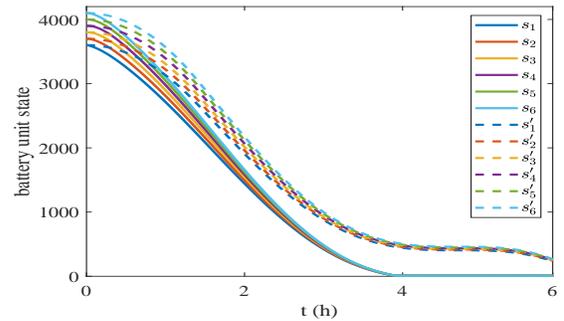}
}
\quad
\subfigure[Output power of all battery units.]{
\includegraphics[width=8cm, height=4.5cm]{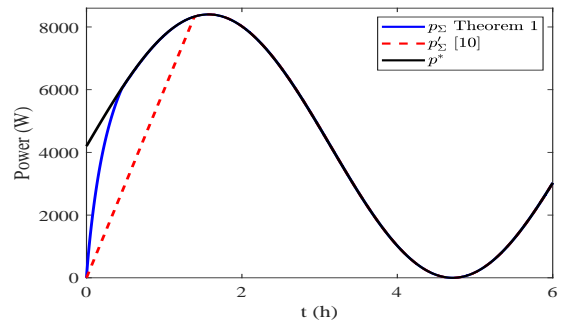}
}
\caption{ BESSs under the controller  (\ref{DistributedController1}) with the observers (\ref{PTDObserver1}) and (\ref{PTDObserver2}) [In discharge mode, Case 1]}
\label{DisCCase1}
\end{figure}

\begin{figure}[htbp]
\centering
\subfigure[SoC of all battery units.]{
\includegraphics[width=8cm, height=4.5cm]{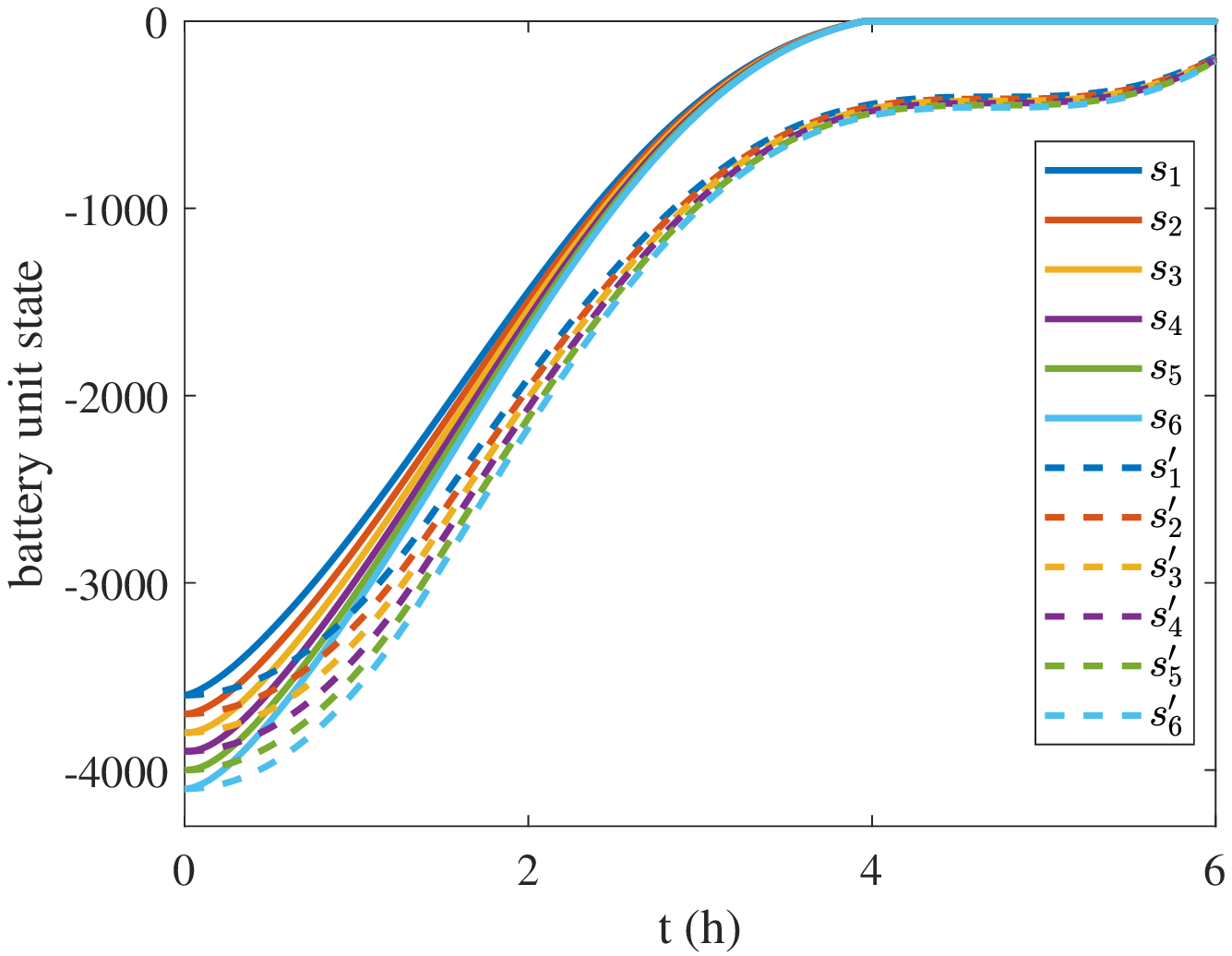}
}
\quad
\subfigure[Output power of all battery units.]{
\includegraphics[width=8cm, height=4.5cm]{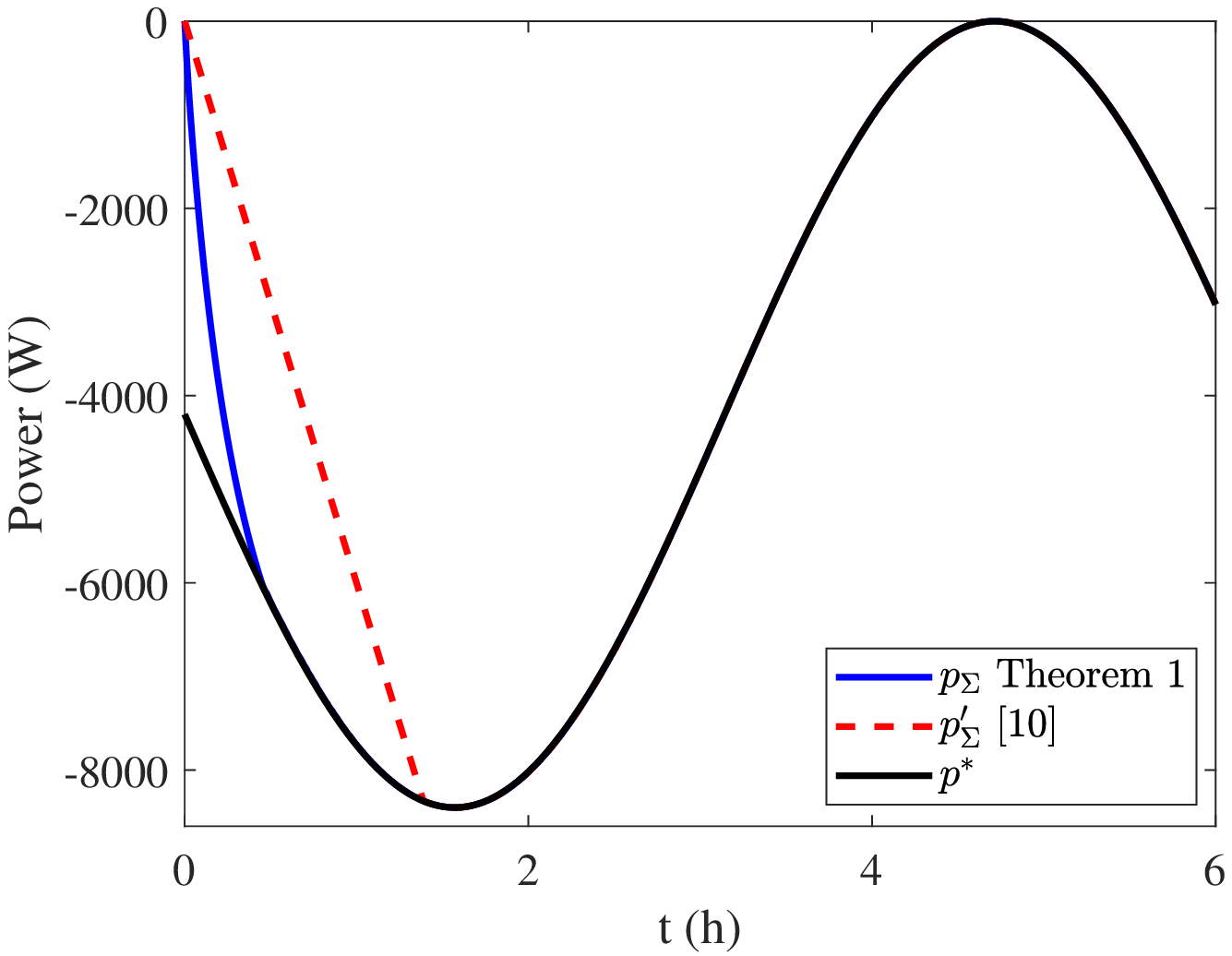}
}
\caption{ BESSs under the controller (\ref{DistributedController1}) with the observers (\ref{PTDObserver1}) and (\ref{PTDObserver2}) [In charge mode, Case 1]}
\label{CCase1}
\end{figure}

\begin{figure}[htbp]
\centering
\subfigure[Output power of all battery units (In discharge mode).]{
\includegraphics[width=8cm, height=4.5cm]{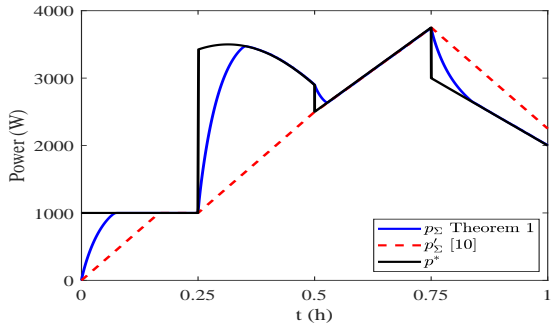}
}
\quad
\subfigure[Output power of all battery units (In charge mode).]{
\includegraphics[width=8cm, height=4.5cm]{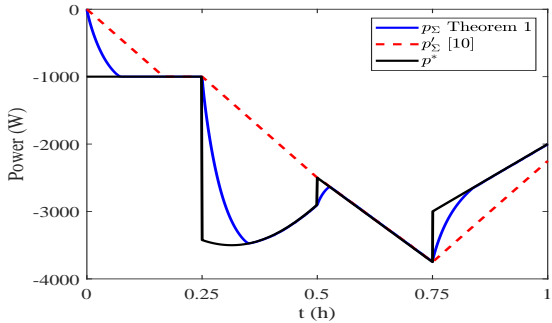}
}
\caption{ BESSs under controllers (\ref{DistributedController1}) with the observers (\ref{PTDObserver1}) and (\ref{PTDObserver2}) [Case 2]}
\label{Case2}
\end{figure}

\section{Conclusions}
This paper has investigated the problem of SoC balancing and power tracking control for BESSs with multiple heterogenous battery units. Two prespecified-time distributed observers have been constructed to estimate the average desired power and the average battery unit state, respectively. Then, by selecting observer's parameters appropriately,
the effectiveness of the proposed control strategy based on the prespecified-time distributed observers has been further analyzed and has been verified by some simulation examples. Future works include extending this results to BESSs with time delays or more communication constraints.


%


\ifCLASSOPTIONcaptionsoff
  \newpage
\fi

\bibliographystyle{ieeetr}        
\bibliography{References}




\end{document}